\documentclass{ifacconf}

\usepackage{graphicx}      % include this line if your document contains figures
\usepackage{natbib}        % required for bibliography

\usepackage{tabularx}
\usepackage{threeparttable}
\usepackage{mathrsfs}
\usepackage{color}
\usepackage{amssymb}

\usepackage{amsmath}
\usepackage{enumerate}
\usepackage{algorithm}
\usepackage{algorithmic}
\usepackage{bm}
\usepackage{makecell}
\usepackage{multirow}
\usepackage{mathtools}
\usepackage{bbm}

\usepackage{dsfont}
\usepackage{pifont}
\usepackage{amsfonts}
\usepackage{subfigure}

\newtheorem{remark}{\textbf{Remark}}
\newtheorem{theorem}{\textbf{Theorem}}
\newtheorem{lemma}{\textbf{Lemma}}

\newtheorem{assumption}{\textbf{Assumption}}

\begin{document}
\begin{frontmatter}

\title{Topology Inference for Immune System Networks by Using Cell Amount Data\thanksref{footnoteinfo}} 
% Title, preferably not more than 10 words.

\thanks[footnoteinfo]{This work was supported by the Knut and Alice Wallenberg (KAW) Foundation, and the Swedish Research
Council (VR).}

\author[Affil_1]{Yushan Li} 
\author[Affil_2]{Rikard Forlin}
\author[Affil_1]{Dimos V. Dimarogonas}
\author[Affil_2]{Petter Brodin}

\address[Affil_1]{Department of Decision and Control Systems, KTH Royal Institute of Technology, Sweden (email: \{yushanl, dimos\}@kth.se).}
\address[Affil_2]{Department of Women’s and Children’s Health, Karolinska Institutet, Sweden (e-mail: \{rikard.forlin, petter.brodin\}@ki.se).}

\begin{abstract}
Recent years have witnessed the advanced development of topology inference research, which helps elucidate the interaction relationships of components in many biological networks. 
This paper focuses on inferring the topology of a group of immune cells, based on the collected data from cell-depletion based experiments. 
The problem is very challenging due to i) the lack of standard analytical models for the cell interactions, 
and ii) the restrictive data availability determined by the huge experiment and time costs. 
% The challenges of this problem come from two aspects: i) there are no standard analytical models that could fully interpret the cell interactions, 
% and ii) currently we can only obtain a pair of data sets in two timepoints due to the huge costs and long time duration. 
To address these issues, we first leverage certain common knowledge and observations on the experiments to characterize three properties on the cell amounts during the interaction process: state non-negativity, ratio-based convergence, and triple signs of topology weights. 
Then, we construct a new model with simple structure and analytical convenience, and obtain sufficient conditions for the model to accommodate all three properties.  
Finally, based on the constructed model, we propose a constrained quadratic programming method to infer the topology from limited number of data pairs. 
Validation on experiment data demonstrate the effectiveness of the proposed method. 
% collabrate
\end{abstract}

\begin{keyword}
Network systems, topology inference, immune cells, network modeling, consensus.
\end{keyword}

\end{frontmatter}
%===============================================================================

\section{Introduction}\label{sec:introduction}

Network systems have been widely used to model many biological networks, such as brain neurons, genes, and proteins networks \citep{barabasi2004network}. 
Topology inference (or identification) has played an important role to understand the intrinsic interaction relations between different entities in the network. 
For example, it can be used to reveal the connectome topology on brain neurons \citep{srivastava2020models}, or interpret the regulatory mechanism of cells against cancer \citep{anastasiadou2018noncoding}. 
In this paper, we focus on inferring the topology of a group of immune cells based on the measured data from cell-depletion experiments. 

% social networks \citep{mao2024social} 

% brain networks \citep{srivastava2020models}

% {\color{red}{
 
% XXXXX 
% (The following paragraphs are aimed to: 
% i) include some related works on inferring the topology of immune network or other types of biological networks; 
% ii) highlight that the existing inference methods cannot be applied to immune networks due to its specificity and the difficulty of conducting the experiment. )

Concerning inferring the topology of network systems from data, numerous works have been developed, e.g., causality-based \citep{9072648}, vector autoregressive based \citep{zaman2020online}, and graph signal processing based \citep{10146241} methods, to name a few. 
In recent years, lots of research have made promising progress in the inference of biological processes especially on gene regulatory networks (GRNs) \citep{badia2023gene}. 
For instance, \cite{tsiantis2018optimality} proposed an inverse optimal control method to identify the underlying optimality principle from time-series data. 
% \cite{daneker2023systems} investigated the parameter identification problem by modeling the biological processes as ordinary differential equations. 
\cite{dong2025data} utilized the Sinkhorn’s algorithm to infer the signs of promotion/inhibition relationships in GRNs. 
\cite{btaf394} designed an optimal transport based method to fit a differential equation model and infer GRNs. 
% \cite{chen2025inferring} Recurrent Autoencoders
% \cite{btaf394} optimal transport 
% \cite{badia2023gene} have summarized a detailed review on how to infer the gene regulatory networks from the genomic, transcriptomic and chromatin information. 
%  mRNA- and protein-expression levels
Despite of the fruitful advances achieved by these works, there is still much left that is difficult to infer.
% For biological networks such as the immune network, the availability of sufficient data is very restrictive. 
Specifically, for the immune system network, 
% The immune networks further 
the difficulties of having both a large biological variation between individuals and technological variation between datasets become evident. 
This makes it difficult to train current models without a large volume of data, a common feature for the aforementioned works. 
Furthermore, models that can take a less granular overview of cell-cell dependencies from sparse data are lacking, as many focus on GRNs within each cell.

Another obstruction that hinders us from inferring the immune network is that the true interaction mechanism of these cells has not yet been fully elucidated. 
Therefore, different from engineering systems that can be described by well-documented dynamical models, there are no universal models for immune networks. 
Luckily, some properties about the interaction process are at least well acknowledged. 
For example, the non-negativity of the cell amount, the triple signs of the topology weights, and the convergence of the cell amounts to a stable baseline after a perturbation, have been established \citep{perelson1997immunology,gunawardena2010models}. 
How to construct an appropriate model that can accommodate these critical properties of the interaction process of immune cells is of great importance. 

% }}

Motivated by the Ockham's razor, 
 % \citep{sober2015ockham}, 
it is meaningful that one can begin with using the simplest linear time-invariant models to fit the data with certain performance guarantees. 
For instance, given a non-negative initial state, the standard consensus model \citep{olfati2007consensus} can guarantee the state evolution of all nodes will converge a common state. 
The scaled consensus model \citep{roy2015scaled} is further proposed to allow for a ratio pattern in the converging state. 
However, these models require the topology weights to be non-negative. 
\cite{altafini2013consensus} investigated the bipartie consensus model specifically considering negative weights, and provided convergence guarantees. 
\cite{aalto2022linear} considered a stochastic linear model for the gene regulatory network and investigated the identifiablity problem from the mean and the covariance of the state distribution. 
% , but lacking strict constraints on actual state level. 
Nevertheless, they still lack non-negative constraints on the actual state level. 
% , but its converging state contains both negative and positive elements under structurally balanced conditions. 

Based on the above observations, this paper aims to construct a simple model to interpret interaction processes and design a method for inferring the immune system's topology. 
% interpret the interaction process and design the topology inference method accordingly.  
The contributions are summarized as follows. 
First, based on the common prior knowledge and experiment observations, 
we formally characterize three properties for the interaction process of immune cells, including the non-negativity of states, the convergence to a relative stable state, and the triple signs of a topology weight.  
Second, we inherit the structure simplicity of traditional consensus models to construct a new nonlinear model with an appropriate physical meaning for the immune network. 
Specifically, sufficient conditions concerning the topology weights and bounds of state ratios are obtained, which guarantee that the required three properties can be met. 
Finally, based on the experiment data, we provide a constrained quadratic programming method to infer the topology. 
Both numerical simulations and experiments verify the effectiveness of the proposed model and method.

The remainder of this paper is organized as follows. 
In Section \ref{sec:model}, the system model is constructed along with feasibility conditions, and the corresponding inference method is also provided. 
Numerical simulations and experiments based on real-life data are conducted in Section \ref{sec:simulation}.
Finally, Section \ref{sec:conclusion} concludes the paper.

\section{Modeling for The Immune Network}\label{sec:model}

Consider that the immune cell network is described by a gragh $\mathcal{G}=\{\mathcal{V},\mathcal{E}\}$, where $\mathcal{V}=\{1,\cdots,n\}$ is the set of different types of cells and $\mathcal{E}$ is the set of connection edges among the cells. 
Specifically, the edge $(i,j)\in\mathcal{E} $ indicates that cell $j$ have influence on cell $i$, and $w_{ij}$ is the connection weight for $(i,j)$. 
Then, $W=[w_{ij}]_{i,j=1}^n \in \mathbb{R}^{n\times n}$ constitutes the topology matrix among the $n$ types of cells.  
% \begin{align}\nonumber
%   \left\{\begin{aligned}
%     &w_{ij}>0:~&&\text{cell}~j~\text{will prompt cell}~i \\
%     &w_{ij}=0:~&&\text{cell}~j~\text{has no influence on cell}~i\\
%     &w_{ij}<0:~&&\text{cell}~j~\text{will inhibit cell}~i
%   \end{aligned}
%   \right..
% \end{align}

\textit{Notations}. In this paper, we denote $\mathbb{R}^n_{>0} $ ($\mathbb{R}^n_{\ge 0} $) as the set of all $n$-dimensional real-value vectors that have positive (non-negative) elements. 
Let $\bm{1}$ and $I_n$ be the all-one vector and $n$-dimensional identity matrix, respectively. 
 % and the set of positive integers. 
The superscript $(\cdot)^\intercal$ denotes the transpose of a matrix, $\operatorname{vec}(\cdot)$ is the vectorized form of a matrix column by column, and $\otimes$ represents the Kronecker product. 
Given a matrix equation $M_1 X M_2=M_3$ where the matrices are with compatible dimensions, the vectorization of the equation satisfies $(M_2^\intercal \otimes M_1)\operatorname{vec}(X)=\operatorname{vec}(M_1 X M_2)=\operatorname{vec}(M_3)$.

\subsection{Principles of Experiments and Data Acquisition}\label{subsec:principle}

% {\color{red}{
% ( XXXXX ~~~The following paragraph about how the experiment works may need be rephrased more sound ~~~~XXXX.)

In the conducted experiments on investigating the dependencies of immune cells, we need to first knock-out a targeted cell type, and then measure the remaining amounts of all cells at some instants. 
However, due to the huge experiment cost and long time process, we can only collect very few samples for each experimental condition. 
% very few samples in each experiment, and the sampling period is not necessarily identical. 
Specifically, we sample the data at $2$ and $20$ hours, respectively, at unstimulated conditions.  
% at multiple stimulations such as lipopolysaccharide (LPS), PMA-Ionomycin, Resiquimod (R848) and unstimulated condition. 
Notice that the sampled data at each timepoint contain massive information about the interactions, e.g., the population of immune proteins, mRNA and other materials. 
This paper focuses on the amounts of the cells and uses them to infer the topology.  

% }}

% When we look at the data, the amounts of different cells are never the same. 
% Hence, directly using the consensus model will be problematic for analysis. 
% A straightforward idea is letting $x_i$ represents the ratio of a kind of cell. 
% However, what we do in practical experiments is to influence the amounts of some types of cells, meaning that both the influenced cell amount and the total cell amount will change. 

% Motivated by the above facts, we consider that the stable status of this system lies in the relative ratio of different cells is stable. 
% % It is based on the assumption that the cell amounts will always reach to a (relatively) stable stage as time goes by. 
% Let $\mu\in\mathbb{R}^n_{>0}$ be a given reference profile, satisfying $\sum_{i=1}^n \mu_i =1$. 
% Then, one critical property that this model should possess is that the steady state needs to satisfy $x_\infty \propto \mu$, i.e., 
% \begin{align}
%   x_j(\infty)/x_i(\infty)=\mu_j/\mu_i,~\forall i,j\in \mathcal{V}. 
% \end{align}

\subsection{Feasibility Analysis of Existing Models}

Let $x_i$ be the amount of cell $i$. 
Based on the common knowledge and experiment evidences on the immune cell network \citep{perelson1997immunology,gunawardena2010models}, 
we observe the following three properties that the network exhibits on the amount level. 
\begin{itemize}
\item \textit{P1): Non-negativity of the state}. During the whole interaction process among the cells, the amounts of all cells should be positive, i.e., 
  \begin{align}
    x_i(k)\ge 0,~\forall i\in\mathcal{V},~k\ge 0. 
  \end{align}
\item \textit{P2): Convergence to relative stable state}. It is acknowledged that for a well-functioned immune network, the cell amounts should remain stable (denoted by $x^\star$) after reacting to counter a virus. 
Specifically, the amounts of different cells are never the same and thus they have a stable percentage. 
Mathematically, this property can be formulated as 
\begin{align}\label{eq:ratio_property}
\lim_{t \to \infty} x(t)=x^\star ~(\|x^\star\|_2<\infty),~~ 
\frac{x_j^\star}{x_i^\star}=\frac{\mu_j}{\mu_i},~\forall i,j,
\end{align}
where $\mu\in\mathbb{R}^n_{>0}$ is the composition (or relative ratio) profile, satisfying $\sum_{i=1}^n \mu_i =1$. 

\item \textit{P3): Triple-attribute of the topology weight}. 
Based on clinic research, the influence of one type of immune cell on the other can be roughly classified into three kinds: prompting, inhibition, or none. 
Mapping these attributes onto the topology weight, it can be formulated as 
% Specifically, the numerical meaning of $w_{ij}$ is 
\begin{align}
  \!\! \left\{\begin{aligned}
    &w_{ij}>0,~&&\text{if cell}~j~\text{prompts cell}~i \\
    &w_{ij}=0,~&&\text{if cell}~j~\text{has no influence on cell}~i\\
    &w_{ij}<0,~&&\text{if cell}~j~\text{inhibits cell}~i
  \end{aligned}
  \right..
\end{align}
\end{itemize}

% Notice that the interaction of the immune cells is an extremely complex process that has not fully understood so far. 
% Hence, it is not our ambition to cover all factors in the process and construct a universal model once for all. 
As discussed in Section \ref{sec:introduction}, the critical limitation in existing linear consensus models lies in the conflict between the state's non-negativity and the triple signs of a topology weight. 
To overcome this dilemma, we construct a new model that slightly breaks the model linearity but preserves the listed three properties, which will be analyzed by using nonlinear Perron-Frobenius theory \citep{Lemmens2012}.

\subsection{The Proposed Model}

Based on the above arguments, we model the system as the following form
\begin{align}\label{eq:x_model}
x(k+1)=\frac{  \mu^\intercal x(k) }{ \mu^\intercal W x(k) }  W x(k)  .
\end{align}
% \textcolor{blue}{
Note that this model is not intended as a first-principles mechanistic description of immune regulation. 
The interaction among immune cells is an extremely complex process that has not been fully understood so far, and it is not our ambition to cover all factors in the immune process. 
Instead, \eqref{eq:x_model} is a coarse-grained model tailored to the experimental cell-composition data and to the inference objective of this work. 
The discrete-time index $k$ represents consecutive observation windows in the experiment. 
% , and thus the dynamics \eqref{eq:x_model} is not an exact discretization of a known continuous-time mechanistic model. 
% }

Compared with the classic linear mapping $x'(k+1)=W x'(k) $ that could represent a fully decentralized interaction, 
the model \eqref{eq:x_model} further introduces a scaling operation on the state (scaled by ${  \mu^\intercal x(k) }/{ \mu^\intercal W x(k) }$) and exhibits certain centralization nature. 
We observe that this point is reasonable because the immune cell system is commonly regarded to be globally regulated in the human body \citep{poon2020whole}. 
%[https://www.sciencedirect.com/science/article/pii/S258900422030701X]
Notice that in this model, 
\begin{align}
  \mu^\intercal x(k+1)=\mu^\intercal  \frac{  \mu^\intercal x(0) }{ \mu^\intercal W x(k) }  W x(k)=\mu^\intercal x(0),
\end{align}
which indicates the weighted sum of $x$ is invariant in the iteration process. 
This invariance property resembles the weighted state sum of a linear consensus process.  
To ease analysis, we introduce $y(k)=x(k)/ \mu^\intercal x(k)$, and then the model \eqref{eq:x_model} is equivalently written as 
\begin{align}\label{eq:y_model}
y(k+1)= F(y(k))=\frac{W y(k)}{\mu^\intercal W y(k) } .
\end{align}
In the subsequent contents, we will mainly focus on model \eqref{eq:y_model} and analyze its convergence. 

\begin{assumption}\label{assu:initial}
The state $y(k)$ is lower bounded by a universal vector $\beta\in\mathbb{R}^n_{>0}$, i.e., $y(k)\ge \beta$ component-wisely. 
\end{assumption}
% Note that by taking $y(0)=x(0)/ \mu^\intercal x(0)$ into $y(0)\ge \beta$, we have $ $
The implication of Assumption \ref{assu:initial} lies in two aspects. 
On the one hand, it indicates that the ratio of a type of cell in the weighted sum of all cells is lowered bounded (i.e., $\frac{x(k)} {\mu^\intercal x(k)} \ge \beta$). 
This point is reasonable because in a healthy immune system, a cell type will have a individual-specific lower bound stemming from both inherited and non-inherited effects, otherwise the immune system will not function well.
On the other hand, it corresponds to the fact that the cell amounts are always nonnegative.  
However, since the topology $W$ contains both negative and non-negative entries, it is possible that not all $W\in\mathbb{R}^{n\times n}$ will satisfy Assumption \ref{assu:initial}.

Next, we will demonstrate under what conditions Assumption \ref{assu:initial} can be met. 
 % which essentially necessitates certain structure properties for the topology. 
Based on the bound vector $\beta$, we have $\mu^\intercal y(0)=1\ge \mu^\intercal \beta$. 
Define an auxiliary residual variable
\begin{align}
  % \mu^\intercal y(0)=1\ge \mu^\intercal \beta \Rightarrow 
  r=1- \mu^\intercal \beta \ge0 .
\end{align}
Then, we define the restricted $\mu$-simplex set as 
\begin{align}
\Delta(\beta)=\{y\in\mathbb{R}^n_{\ge0}: ~\mu^\intercal y=1,~ y\ge\beta\}.
\end{align}
The following result shows how to guarantee that the mapping $F(y)$ is invariant under $\Delta(\beta)$.

\begin{theorem}[Invariance for $F$]\label{th:F_invariance}
Suppose there exist positive constants $b_l$ and $b_u$ such that the following bounds hold for each row of $W$
\begin{equation}\label{eq:lower_upper_bounds}
\left\{\begin{aligned}
c_l(i)&=\sum_{j=1}^n W_{ij}\beta_j
    + r\min_{1\le j\le n}\frac{W_{ij}}{\mu_j}
    \ge b_l~\\
c_u(i)&=\sum_{j=1}^n W_{ij}\beta_j
    + r\max_{1\le j\le n}\frac{W_{ij}}{\mu_j} \le b_u,~
    \frac{b_l }{b_u}\mathbf{1} \ge \beta
\end{aligned}\right..
\end{equation}
Then, for all $y\in\Delta(\beta)$, it holds that 
\begin{align}
b_l \mathbf{1}\le Wy\le b_u \mathbf{1},\qquad F(y)\in\Delta(\beta).
\end{align}
% Specifically, the mapping $F$ admits a unique fixed point $y^\star\in\Delta(\beta)$ satisfying 
% \begin{align}
% Wy^\star=\kappa y^\star,\qquad \langle \mu,y^\star\rangle=1. 
% \end{align}
\end{theorem}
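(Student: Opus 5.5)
Write any $y\in\Delta(\beta)$ as $y=\beta+z$ with $z\ge 0$. Since $\mu^\intercal y=1$, we get $\mu^\intercal z = 1-\mu^\intercal\beta = r$. The row $i$ of $Wy$ then splits as
\begin{align*}
(Wy)_i=\sum_{j=1}^n W_{ij}\beta_j+\sum_{j=1}^n W_{ij}z_j .
\end{align*}
The first term is exactly the common part of $c_l(i)$ and $c_u(i)$, so everything reduces to bounding the linear functional $z\mapsto \sum_j W_{ij}z_j$ over the scaled simplex $\{z\ge0,\ \mu^\intercal z=r\}$.

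For this I would substitute $s_j=\mu_j z_j\ge 0$, which satisfies $\sum_j s_j=r$ because $\mu\in\mathbb{R}^n_{>0}$. This gives
\begin{align*}
\sum_{j} W_{ij}z_j=\sum_j \frac{W_{ij}}{\mu_j}s_j ,
\end{align*}
a convex combination of the ratios $W_{ij}/\mu_j$, scaled by $r$. Hence it lies between $r\min_j W_{ij}/\mu_j$ and $r\max_j W_{ij}/\mu_j$. If $r=0$, then $z=0$ and both bounds hold trivially. Combining with the first term yields $c_l(i)\le (Wy)_i\le c_u(i)$, and the hypotheses then give $b_l\mathbf{1}\le Wy\le b_u\mathbf{1}$.

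The second claim, $F(y)\in\Delta(\beta)$, follows from the first. Since $b_l>0$, we have $Wy>0$ componentwise. Because $\mu>0$ with $\sum_i\mu_i=1$, the normalizer satisfies
\begin{align*}
b_l\le \mu^\intercal Wy\le b_u ,
\end{align*}
and in particular it is positive, so $F$ is well defined and $F(y)\ge 0$. The identity $\mu^\intercal F(y)=1$ holds by construction. For the lower bound, each component satisfies $F(y)_i=(Wy)_i/\mu^\intercal Wy\ge b_l/b_u\ge\beta_i$, using the last hypothesis $\frac{b_l}{b_u}\mathbf{1}\ge\beta$.

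The only step needing care is the middle one: the extreme values of the linear functional over $\Delta(\beta)$ are attained at the vertices $\beta+(r/\mu_j)e_j$. This is exactly what the terms $r\min_j W_{ij}/\mu_j$ and $r\max_j W_{ij}/\mu_j$ encode, and the substitution $s_j=\mu_j z_j$ makes it transparent. The remaining steps are routine.
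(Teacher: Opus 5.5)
Your proof is correct and follows the paper's argument essentially line for line: decompose $y=\beta+\eta$ with $\mu^\intercal\eta=r$, rewrite $\sum_j W_{ij}\eta_j=\sum_j (W_{ij}/\mu_j)(\mu_j\eta_j)$ and bound it between $r\min_j W_{ij}/\mu_j$ and $r\max_j W_{ij}/\mu_j$, then conclude $F(y)\ge (b_l/b_u)\mathbf{1}\ge\beta$ using $\mu^\intercal\mathbf{1}=1$. You also make explicit that the normalizer $\mu^\intercal Wy$ is positive and that $\mu^\intercal F(y)=1$, both of which the paper leaves implicit.
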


\begin{pf}
First, we prove the state positivity in the dynamic process \eqref{eq:y_model}. 
Since $y\in\Delta(\beta)$, we decompose $y=\beta+\eta$, where $\eta\ge0$ by construction. 
Then, for the $i$-th element of $Wy$, we have
\begin{align}
(Wy)_i &=\sum_{j=1}^n W_{ij}(\beta_j+\eta_j) = \sum_{j=1}^n W_{ij}\beta_j
   + \sum_{j=1}^n \frac{W_{ij}}{\mu_j}(\mu_j\eta_j).
\end{align}
Notice that $r=1- \mu^\intercal \beta= \mu^\intercal (y-\beta)= \mu^\intercal \eta$, and thus $(Wy)_i$ is bounded by
\begin{align}
&\begin{aligned}
(Wy)_i &\ge \sum_{j=1}^n W_{ij}\beta_j +  \min_{1\le j\le n} \frac{W_{ij}}{\mu_j} \sum_{j=1}^n (\mu_j\eta_j) \\
&\ge \sum_{j=1}^n W_{ij}\beta_j +  r \min_{1\le j\le n} \frac{W_{ij}}{\mu_j} \ge b_l,
\end{aligned} \\
&\begin{aligned}
(Wy)_i &\le \sum_{j=1}^n W_{ij}\beta_j +  \max_{1\le j\le n} \frac{W_{ij}}{\mu_j} \sum_{j=1}^n (\mu_j\eta_j) \\
&\le \sum_{j=1}^n W_{ij}\beta_j +  r \max_{1\le j\le n} \frac{W_{ij}}{\mu_j} \le b_u.
\end{aligned}
\end{align}
% \begin{align}
% &\left\{\begin{aligned}
% (Wy)_i &\ge \sum_{j=1}^n W_{ij}\beta_j +  \min_{1\le j\le n} \frac{W_{ij}}{\mu_j} \sum_{j=1}^n (\mu_j\eta_j) \\
% (Wy)_i &\le \sum_{j=1}^n W_{ij}\beta_j +  \max_{1\le j\le n} \frac{W_{ij}}{\mu_j} \sum_{j=1}^n (\mu_j\eta_j)
% \end{aligned}
% \right. \nonumber \\
% \Rightarrow 
% &\left\{\begin{aligned}
% (Wy)_i &\ge \sum_{j=1}^n W_{ij}\beta_j +  r \min_{1\le j\le n} \frac{W_{ij}}{\mu_j} \ge b_l \\
% (Wy)_i &\le \sum_{j=1}^n W_{ij}\beta_j +  r \max_{1\le j\le n} \frac{W_{ij}}{\mu_j} \le b_u
% \end{aligned}
% \right. 
% \end{align}
Then, we have $b_l \mathbf{1}\le Wy\le b_u \mathbf{1}$ and 
\begin{align}
  F(y)=\frac{Wy}{\mu^\intercal W y}\ge \frac{ b_l \mathbf{1} }{\mu^\intercal (b_u \mathbf{1})}= \frac{b_l }{b_u}\mathbf{1} \ge \beta,
\end{align}
where the property $\mu^\intercal \mathbf{1}=1 $ is applied in the second equality. 
By induction, it follows that $y(k)=F(y(k-1))\in \Delta(\beta)$ for all $k\ge 0$. 
The proof is completed. $\hfill\square$
\end{pf}

Theorem \ref{th:F_invariance} gives a sufficient construction for $W$ to ensure that the state is always contained in the cone set $\Delta(\beta)$. 
Intuitively, \eqref{eq:lower_upper_bounds} has no direct dependence on the real-time state, and indicates that the change from $y_i(k)$ to $y_i(k+1)$ is bounded by $(b_u-b_l)$. 
This point corresponds to our common sense that the immune cells amounts will vary in a gradual way assuming a reasonable time-period \citep{brodin2017human}, e.g., 2h-20h or even a couple of weeks in between. 
We then present the following result.

\begin{theorem}[Convergence of $F$]\label{th:F_convergence} 
Under the conditions of Theorem \ref{th:F_invariance}, 
$y(k+1)=F(y(k))$ will converge to a unique fixed point $y^\star\in\Delta(\beta)$ satisfying 
\begin{align}\label{eq:fix}
Wy^\star=s y^\star~\text{with}~s=\mu^\intercal W y^\star. 
\end{align}
\end{theorem}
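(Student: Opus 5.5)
The approach is to exploit the fact that, although $W$ may have negative entries, Theorem \ref{th:F_invariance} makes $W$ behave like a positive operator on the cone generated by $\Delta(\beta)$. Then Birkhoff--Hopf contraction in Hilbert's projective metric, the nonlinear Perron--Frobenius framework of \citep{Lemmens2012}, gives both existence and uniqueness of the fixed point and convergence of the iteration.

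\textbf{Cone and its image.} Let $K=\{t y: t\ge 0,\ y\in\Delta(\beta)\}$. This is a closed convex cone; it is polyhedral, since $\Delta(\beta)$ is a compact polytope. Moreover $K$ is contained in the interior of $\mathbb{R}^n_{\ge0}$ apart from $0$. Linearity of $W$ together with Theorem \ref{th:F_invariance} gives $W K\subseteq K$. In fact $Wy$ lies in the box $[b_l,b_u]^n$ for every $y\in\Delta(\beta)$.

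\textbf{Hilbert metric.} I work with the Hilbert projective metric $d_K$ on $K$. The plan is to show that $W(K)$ has finite $d_K$-diameter. Then Birkhoff's theorem yields a contraction ratio $\tanh(\Delta/4)<1$, where $\Delta$ is that diameter. Since $F(y)=Wy/\mu^\intercal Wy$ is the projective normalization of $W$ onto the slice $\mu^\intercal y=1$, $F$ is then a strict contraction of the complete metric space $(\Delta(\beta),d_K)$. The slice is compact, and $d_K$ restricted to it is a genuine metric topologically equivalent to the Euclidean one on compacta in the relative interior.

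Banach's fixed point theorem then gives a unique $y^\star\in\Delta(\beta)$ with $F(y^\star)=y^\star$, together with geometric convergence $d_K(y(k),y^\star)\to0$. That convergence upgrades to Euclidean convergence because both points lie on the slice. Writing out $F(y^\star)=y^\star$ gives $Wy^\star=(\mu^\intercal Wy^\star)\,y^\star=s y^\star$, which is \eqref{eq:fix}. We also get $s\ge b_l>0$, since $\mu^\intercal\mathbf{1}=1$.

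\textbf{Main obstacle: finite diameter.} The hard step is proving that the image really has finite projective diameter relative to $K$, not just relative to $\mathbb{R}^n_{\ge0}$. In the orthant metric the bound is immediate: $d(u,v)\le 2\log(b_u/b_l)$ for $u,v\in W(\Delta(\beta))$. However, invariance of $K$ needs the $K$-metric, which requires $W(\Delta(\beta))$ to sit strictly inside $K$.

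I would first try to work in the orthant cone instead. Set $K'=\mathbb{R}^n_{\ge0}$ and $D=\{y\ge 0:\mu^\intercal y=1\}\cap F^{-1}(\cdot)$. The point is that $F$ maps $\Delta(\beta)$ into $\Delta(\beta)$, and on $\Delta(\beta)$ all coordinates are comparable. Then check directly that $F$ contracts the orthant Hilbert metric $d_H(u,v)=\log\max_i(u_i/v_i)-\log\min_i(u_i/v_i)$ on $\Delta(\beta)$. For $u,v\in\Delta(\beta)$, write $u=\beta+\eta$ and $v=\beta+\eta'$, and bound $(Wu)_i/(Wv)_i$ using the row bounds $c_l(i),c_u(i)$. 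This gives $d_H(Wu,Wv)\le\kappa\, d_H(u,v)$, with $\kappa<1$ derived from $b_l/b_u$.

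If this direct estimate fails because of the negative entries, the fallback is Brouwer's theorem for existence on the compact convex set $\Delta(\beta)$. Uniqueness and convergence would then follow by showing that $F$ is a strict $d_H$-contraction on the compact invariant set. There I expect to need the strict inequalities $b_l>0$ and $b_u<\infty$ to produce a uniform contraction factor.
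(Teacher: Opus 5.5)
Your framework is the right one: the cone $K$ over $\Delta(\beta)$, Birkhoff in $d_K$, then Banach on the slice. But the proposal stops at the step that carries the whole proof. You never show that $W(K)$ has finite $d_K$-diameter, i.e.\ that $F(\Delta(\beta))$ lies in a compact subset of the relative interior of $\Delta(\beta)$. Without this you get neither $\kappa<1$ nor a usable metric space. $d_K$ is infinite between a relative-boundary point of $\Delta(\beta)$ and an interior point, so it is not even a finite metric on $\Delta(\beta)$. Your fallbacks do not close the hole. The row bounds $b_l\le (Wu)_i\le b_u$ only give $d_H(Wu,Wv)\le 2\log(b_u/b_l)$. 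That bounds the diameter of the image, not the ratio $d_H(Wu,Wv)/d_H(u,v)$. The Brouwer route gives existence but then just restates the contraction you have not proved.

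The missing observation is short. The cases $r=0$, and $b_l=b_u$ (where $F\equiv\mathbf{1}$), are trivial, so assume $r>0$ and $b_l<b_u$. Then $F(y)_i>b_l/b_u\ge\beta_i$ for every $y\in\Delta(\beta)$ and every $i$. Equality would need $(Wy)_i=b_l$ and $\mu^\intercal Wy=b_u$, and the latter forces $Wy=b_u\mathbf{1}$ because $\mu>0$ and $\mu^\intercal\mathbf{1}=1$. By compactness, $F(\Delta(\beta))\subseteq C:=\{y\in\Delta(\beta): y\ge\beta+\epsilon\mathbf{1}\}$ for some $\epsilon>0$. Hence $W(K\setminus\{0\})\subset\operatorname{int}K$ and $\operatorname{diam}_{d_K}W(K)<\infty$. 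Banach then applies on the compact $F$-invariant set $(C,d_K)$, which contains every fixed point and every $y(1)$.

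Equivalently, take the vertices $v^j=\beta+(r/\mu_j)e_j$ and use $\beta=\sum_k\mu_k\beta_k v^k$. This gives $W_{ij}=\frac{\mu_j}{r}\big((Wv^j)_i-(W\beta)_i\big)\ge\frac{\mu_j}{r}(b_l-b_u\,\mu^\intercal\beta)\ge 0$, strictly when $b_l<b_u$. So the hypotheses force $W$ to be a positive matrix. Your orthant ``first try'' therefore does work, but with $\delta(W)=\log\max_{i,j,k,l}\frac{W_{ik}W_{jl}}{W_{jk}W_{il}}$, not with a ratio derived from row bounds.

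The paper's proof is exactly that orthant route. It applies Birkhoff to $W$ on $\mathbb{R}^n_{>0}$ without checking $W(\mathbb{R}^n_{>0})\subseteq\mathbb{R}^n_{>0}$. It also sets $\delta(W)=2\log(b_u/b_l)$ using bounds that hold only on $\Delta(\beta)$. Your suspicion of that step is justified, because the stated rate fails in general. Take $n=2$, $\mu=\beta=(\tfrac12,\tfrac12)^\intercal$, $W=\tfrac18\begin{pmatrix}5&1\\1&5\end{pmatrix}$, $b_l=\tfrac12$, $b_u=1$. The points $x=(\tfrac12,\tfrac32)^\intercal$ and $y=(1,1)^\intercal$ give $d_H(F(x),F(y))=\log 2>\tfrac13\log 3=\tanh\big(\tfrac12\log\tfrac{b_u}{b_l}\big)\,d_H(x,y)$.
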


\begin{pf}
% We prove the two statements in this theorem by i) demonstrating the existence of the fixed point, and ii) deriving the converging point, respectively. 
% \noindent \textit{\textbf{Part 1: existence of the fixed point}}\\
% \indent 
To analyze the convergence of the model, we need to borrow some notions from nonlinear Perron-Frobenius theory \citep{Lemmens2012}. 
First, let $\mathcal{K}=\mathbb{R}^n_{>0}$ denote the interior of the closed positive cone $\mathbb{R}^n_{\ge0}$, and define the \emph{Hilbert projective metric} on $\mathcal{K}$ as\footnote{This metric is originally defined based on partially ordered vector spaces. Since this paper only focuses on the positive orthant $\mathbb{R}^n_{>0}$, we directly give its reduced form here.}
\begin{align}
d_H(x,y)
%\log\Big(\max_i\frac{x_i}{y_i}\cdot\max_i\frac{y_i}{x_i}\Big)
=\log\Big(\max_{i,j}\frac{x_i y_j}{y_i x_j} \Big),~x,y\in\mathcal{K} .
\end{align}
For a linear operator $L$ satisfying $L(\mathcal{K})\subset \mathcal{K}$, its projective diameter is defined by
\begin{align}
\delta(L)=\sup_{x,y\in\mathcal{K}} d_H(Lx,Ly).
\end{align}
Next, we introduce the following Birkhoff’s contraction lemma \citep[Theorem~2.9]{lemmens2014birkhoff}. 
\begin{lemma}\label{le:convergence}
If $L$ is a cone-linear mapping with $L(\mathcal{K})\subset\mathcal{K}$, then $L$ is a contraction in the Hilbert metric, 
satisfying 
\begin{align}\label{eq:contraction}
d_H(Lx,Ly)\le \kappa(L) \cdot d_H(x,y),
\end{align}
where $\kappa(L)=\tanh\Big(\frac{\delta(L)}{4}\Big)$ is the contraction ratio. 
% Specifically, if $\delta(W)<\infty$, then there exists a unique $v \in \mathcal{K}$ such that for any $y \in \mathcal{K}$, it holds that 
% \begin{align}
% d_H(W^k y,v)\le \kappa^k d_H(y,v).
% \end{align}
% % Specifically, if $\delta(W)<\infty$, then there exists a unique $v \in \mathcal{K}$ (normalized by $\|v\|=1$) such that 
% % \begin{align}
% % W v=r_W v,
% % \end{align}
% % where $r_W=\lim_{k \rightarrow \infty}\left\|W^k\right\|_{\mathcal{K}}^{1 / k}$ is the the cone spectral radius of $W$ and $\|W^k\|_{\mathcal{K}}=\sup \{\|W^k y\|: y \in \mathcal{K} \text { with }\|y\|=1\}$. 
% % For any $y \in \mathcal{K}$, it holds that 
% % \begin{align}
% % d_H(W^k y,v)\le \kappa^k d_H(y,v).
% % \end{align}
\end{lemma}

Note that the above Lemma was originally targeted at a linear mapping $L$, and we need to demonstrate how the constructed model $y(k+1)=F(y(k))=Wy(k)/(\mu^\intercal Wy(k))$ can sufficiently meet the conclusion in Lemma \ref{le:convergence}. 
First, notice that i) $\Delta(\beta)\subset \mathcal{K}$ is a compact subset in $\mathcal{K}$, and ii) $Wy\subset \Delta(\beta)$ always holds by assumption. 
Hence, the conclusion \eqref{eq:contraction} directly applies to the linear mapping $W$, i.e., 
\begin{align}\label{eq:contraction_W}
d_H(Wx,Wy)\le \kappa(W) \cdot d_H(x,y),~\forall x,y\in \Delta(\beta). 
\end{align}
Second, as the nonlinear mapping $F(y)$ only applies a normalization on $Wy$, 
it follows from the definition of $d_H$ that $\forall x,y\in\Delta(\beta)$,
\begin{align}
  d_H(F(x), F(y))&=d_H\left( \frac{Wx}{\mu^\intercal Wx}, \frac{Wy}{\mu^\intercal Wy}\right) \nonumber \\
  &=d_H( Wx, Wy)\le \kappa(W) d_H(x,y), 
  % d_H(\frac{y}{c}, v)=d_H(y, v),~y\in\mathcal{K},
\end{align}
which means that the normalization does not change the projective direction of the mapping $W$. 
Clearly, the mapping $F(y)$ inherits the same contraction ratio as \eqref{eq:contraction_W}. 

Finally, since $Wy$ is constrained by $b_l \mathbf{1}\le Wy\le b_u \mathbf{1}$, the projective diameter $\delta(W)$ is explicitly given by 
\begin{align}
\delta(W) &=\sup_{x,y\in\mathcal{K}} d_H(Wx,Wy) \nonumber \\
& = \sup_{x,y\in\mathcal{K}} \log\Big(\max_{i,j}\frac{(Wx)_i (Wy)_j}{ (Wy)_i (Wx)_j} \Big) \nonumber \\
&=\log\Big(\frac{b_u}{ b_l} \Big)^2=2\log\Big(\frac{b_u}{ b_l} \Big)<\infty. 
\end{align}
Thus, we have 
\begin{align}\label{eq:speed}
  \kappa(W)=\tanh\Big(\frac{\delta(W)}{4}\Big)=\tanh\Big(\frac{1}{2}\log\frac{b_u}{b_l}\Big)<1. 
\end{align}
By referring to the well-known Banach fixed-point theorem \citep{latif2013banach}, $(\Delta(\beta),d_H)$ is a non-empty complete metric space with a contraction mapping $F(\cdot)$, and thus $F(y(k))$ will converge to a unique fixed point $y^\star\in\Delta(\beta)$ such that 
\begin{align}
 F(y^\star)=\frac{Wy^\star}{\mu^\intercal W y^\star}=y^\star,
\end{align}
which leads to \eqref{eq:fix} and completes the proof. $\hfill\square$
\end{pf}

Theorem \ref{th:F_convergence} reveals that if the matrix $W$ satisfies the condition \eqref{eq:lower_upper_bounds}, the state $x(k)$ will converge to a fixed point that is determined by $W$. 
This property of $F(\cdot)$ is slightly different from the consensus model with a row-stochastic topology matrix, because the converging state of the latter is also dependent on the initial state. 
More importantly, considering that the ratio condition \eqref{eq:ratio_property} is expected to be met, if we suppose $y^\star=\alpha \mu$, then it follows that 
\begin{align}\label{eq:eigenvector}
W (\alpha \mu) = \left(\mu^\intercal W (\alpha \mu) \right ) (\alpha \mu) ~\Rightarrow ~W \mu =\alpha_{\mu} \mu, 
\end{align}
where $\alpha_{\mu}=\alpha \mu^\intercal W  \mu$. 
Clearly, here $\mu $ is a right eigenvector of $W$. 
Substituting \eqref{eq:eigenvector} into $y^\star=F(y^\star)$, we have
\begin{align}\label{eq:converge_state}
  y^\star&=\frac{W \mu}{\mu^\intercal W\mu} = \frac{\mu}{\mu^\intercal \mu} , \\
  x^\star&=\frac{\mu}{\mu^\intercal \mu}(\mu^\intercal x^\star)=\frac{(\mu^\intercal x(0))}{\mu^\intercal \mu} \mu,
\end{align}
where the property $\mu^\intercal x(k)=\mu^\intercal x(0) $ is applied in the second equality. 
Clearly, the stable cell amount distribution is directly dependent on $x(0)$ while exhibiting a ratio pattern.

\begin{remark}
Compared with the models M1)-M3), the proposed model has the following advantages regarding the feasibility conditions. 
i) There are no connectivity requirements and magnitude constraints on the eigenvalues of $W$ for stability concerns, only requiring $\mu$ to be a right eigenvector of $W$ by \eqref{eq:eigenvector}. 
ii) It is the cell percentage in the overall weighted amount sum, instead of the amount itself, that requires to bounded, which is more practical for the immune network setting.  
\end{remark}

\subsection{Method Design Under Limited Data}

With the modeling for the immune cell established, this part shows how to inversely infer the topology from limited data with measurement noises in an optimization framework.

Note that the stable reference profile $\mu$ and the bound parameters $\{\beta, b_l,b_u\}$ are given based on our prior knowledge and pre-experiments on the immune system. 
Considering the measurement noises, denote the data pair at $r$-th experiment as $\{\tilde{x}_0^{r},\tilde{x}_1^{r}\}$, normalized into 
% Denote the data pair of the measured cell amounts at $r$-th experiment as $\{\tilde{x}_0^{r},\tilde{x}_1^{r}\}$, and they are normalized into 
% Then, each measured data pair $\{\tilde{x}_0^{r},\tilde{x}_1^{r}\}$ is normalized into 
% and project it onto $\Delta(\beta)$:
\begin{align}
\tilde{y}_0^{r}=\frac{\tilde{x}_0^{r}}{\mu^\intercal \tilde{x}_0^{r}}, ~
\tilde{y}_1^{r}=\frac{\tilde{x}_1^{r}}{\mu^\intercal \tilde{x}_1^{r}}. 
\end{align}
We remark that the selection of $\beta$ is very conservative in practice, and thus $\tilde{y}_0^{r},\tilde{y}_1^{r}\ge \beta$ generally hold in the experiments. 
If not, we only need to further project $\tilde{y}_0^{r}$ and $\tilde{y}_1^{r}$ into $\Delta(\beta)$. 
Note that the data $\tilde{y}_1^{r}$ is desired to approximate $F(\tilde{y}_0^{r})$, or equivalently, $ (\mu^\intercal W \tilde{y}_0^{r}) \tilde{y}_1^{r}- W \tilde{y}_0^{r} \approx 0$, 
% \begin{align}
%  (\mu^\intercal W \tilde{y}_0^{r}) \tilde{y}_1^{r}- W \tilde{y}_0^{r} \approx 0,
% \end{align}
which is the residual error of the objective function in our optimization problem. 
In a vectorized form, this error contained in the data is given by
% it can be represented as 
\begin{align}
  \tilde{M}_D \cdot \operatorname{vec}(W) \triangleq [\tilde{M}_1^\intercal,\cdots,\tilde{M}_m^\intercal]^\intercal \cdot \operatorname{vec}(W),
\end{align}
where $\tilde{M}_r =\left[  \tilde{y}_1^{r} ( (\tilde{y}_0^{r})^\intercal \otimes\mu^\intercal ) \!-\! (\tilde{y}_0^{r})^\intercal \otimes I_n  \right],~r=1,\cdots,m$. 

Next, we demonstrate how to make $W$ meet the state positivity on $\Delta(\beta)$. 
Notice that in the condition \eqref{eq:lower_upper_bounds}, the function $c_l(i)$ is concave while $c_u(i)$ is convex regarding $W_{[i,:]}$. 
Hence, $c_l(i)\ge b_l$ and $c_b(i)\le b_u$ are all convex constraints. 
To facilitate solving the problem in a standard quadratic program (QP), we introduce two auxiliary variables $\{p_i,q_i\}$ and equivalently write \eqref{eq:lower_upper_bounds} as 
\begin{align}\label{eq:new_cons}
\left\{\begin{aligned}
    &\sum_{j=1}^n W_{ij}\beta_j+r p_i\ge b_l, ~ p_i\le \frac{W_{ij}}{\mu_j }, \forall j \\
  &\sum_{j=1}^n W_{ij}\beta_j+r q_i\le b_u, ~ q_i\ge \frac{W_{ij}}{\mu_j }, \forall j
\end{aligned}
\right.,
\end{align}
which adds $2n$ more constraints for each row but will not affect the feasibility. 
Based on Theorem \ref{th:F_invariance}, 
% when \eqref{eq:new_cons} and $({b_l }/{b_u})\mathbf{1} \ge \beta$ hold, 
when \eqref{eq:new_cons} and $\tilde{y}_0^{r}\in\Delta(\beta)$ hold, 
the properties $b_l \mathbf{1}\le W \tilde{y}_0^{r}\le b_u \mathbf{1}$ and $ F(\tilde{y}_0^{r})\in\Delta(\beta)$ will be satisfied automatically. 
In addition, recall that $y^\star=\frac{\mu}{\mu^\intercal \mu}$ is the fixed point of $F(\cdot)$ regardless of the data, 
and substituting it into $y^\star = F(y^\star)$ yields 
\begin{align}\label{eq:W_const}
   W\mu =s_{\mu} \mu ,
\end{align}
where $s_{\mu}=\frac{\mu^\intercal W \mu}{\mu^\intercal \mu}$. 
Hence, \eqref{eq:W_const} should be also treated as a strict constraint. 
% Equivalently, this constraint can be written as 
% \begin{align}\label{eq:mu_equality}
%   % b_l \mathbf{1} \le  s_{\mu}y^\star= W y^\star \le b_u \mathbf{1} ~ \Rightarrow ~  b_l\le s_{\mu} \le b_u,
%   W \mu= s_{\mu} \mu,~b_l\le s_{\mu} \le b_u,
% \end{align}
% % we require that $\mu$ is the fixed point of the mapping $F$. 
% % Left-multiplying $W \mu$ with $\mu^\intercal$, we have 
% % \begin{align}\label{eq:mu_equality}
% % b_l\le s_{\mu} =\mu^\intercal W \mu \le b_u,
% % \end{align}
% where $s_{\mu} = \frac{\mu^\intercal W \mu}{\mu^\intercal \mu} $, and the properties $\mu^\intercal y^\star=1 $ and $\mu^\intercal \bm{1}=1$ is used. 
% % Hence, the following constraint is also necessary
% % \begin{align}
% %   W\mu=s_{\mu} \mu~\text{for certain}~s_{\mu}\in[b_l,b_u]. 
% % \end{align}

Finally, based on the above formulation, inferring the topology $W$ from limited noisy data pairs is transformed to solving the following convex QP problem
\begin{subequations}\label{eq:op_problem}
\begin{align}
\!\! \min_{W,\{p_i,q_i\},s_{\mu}}~~ &  \|  \tilde{M}_D \cdot \operatorname{vec}(W) \|_2^2  + \gamma\| \operatorname{vec}(W)\|_1 \\
\text{s.t.}~~~~
% &\varepsilon\le s_r\le U, ~ r=1,\dots,m,\\
& W\mu =s_{\mu} \mu,\\
&\sum_{j=1}^n W_{ij}\beta_j+r p_i\ge b_l, ~ p_i\le \frac{W_{ij}}{\mu_j },~ \forall i,j,\\
&\sum_{j=1}^n W_{ij}\beta_j+r q_i\le b_u, ~ q_i\ge \frac{W_{ij}}{\mu_j }, ~ \forall i,j,
\end{align}
\end{subequations}
where $\gamma>0$ is the regularization parameter associated with $\| \operatorname{vec}(W)\|_1$. 
% Notice that the variable $s_{\mu}=\frac{\mu^\intercal W \mu}{\mu^\intercal \mu}$ does not affect the mapping result of $F(y)$ and the converging state $y^\star$, as revealed by \eqref{eq:converge_state}. 
% The equality $W\mu=s_{\mu}\mu$ essentially confines $W$ to an affine subspace. 
% However, different from the noise-free situation discussed in Remark \ref{rema:number},
% the equality $W\mu=s_{\mu}\mu$ confines $W$ to an affine subspace, and $s_{\mu}$ will affect the data residual and the regularization. 
% Hence, it is necessary to treat $s_{\mu}$ as a free variable to best reconcile the objective function, instead of fixing it as point in $[b_l, b_u]$. 

\begin{remark}
Note that introducing the additional $L_1$ norm term has two benefits. 
On the one hand, it could promote a sparse pattern on $W$, which corresponds to our common knowledge that one type of immune cell is directly influenced by only a few other cells. 
On the other hand, if only limited amount of experiment data is available (e.g., when $m<n$), it is very likely that the data matrix $\tilde{M}_D$ has small rank and renders no unique solution when only $\|  \tilde{M} \operatorname{vec}(W) \|_2^2$ is considered. 
Considering the introduced $L_1$ norm term and the randomness of $\tilde{M}_D$ in the objective function, the problem will have a unique solution with high probability (see \cite[Section 2]{tibshirani2013lasso} for details). 
\end{remark}

\section{Simulations and Experiments}\label{sec:simulation}

\subsection{Numerical Simulations}
First, we use an example network with $5$ nodes, whose topology matrix is given by 
\begin{align}\label{ex:network}
W =\begin{bmatrix}
0.4368 & 0.1690 & 0.9413 & 0.1539 & 0.2382 \\
0.1288 & 0.5313 & 0.4137 & 0.6771 & -0.0410 \\
0      & 0.4066 & -0.0418 & 0      & 1.4969 \\
0.1105 & 0.5494 & 0.2769 & 0.7737 & -0.0407 \\
1.1174 & 0      & 0.5472 & 0.0940 & 0.4353
\end{bmatrix}.
\end{align}

The ratio profile vector is $\mu= [0.22, 0.15, 0.23, 0.14, 0.26]^\intercal$, and the bound parameters are given by $\beta = \bm{1}/10$, $b_l  = 0.01$, and 
$b_u = 10$. 
It can be verified that this $W$ satisfies the conditions in Theorem \ref{th:F_invariance} and has a right eigenvector $\mu$. 
Since $\hat{W}$ is identifiable up a scalar ambiguity, we use a revised mean square error (denoted by $E(W,\hat{W})$) and the ratio of correctly inferred edge signs (denoted by $R(W,\hat{W})$) to evaluate the inference performance on the topology, 
% The error is given by 
\begin{align}
  &E(W,\hat{W})= \frac{\min_{\alpha>0} \| W- \alpha \hat{W}\|_{Frob}^2} {\| W \|_{Frob}^2 }, \\
  &R(W,\hat{W})=1-\frac{ \|\operatorname{sign}(W) - \operatorname{sign}(\hat{W}) \|_0  }{n^2}. 
\end{align}
The smaller $E(W,\hat{W})$ and higher $R(W,\hat{W})$ indicate better inference performance. 

Given the initial state $x(0)=[345,75,1200,345,457]^\intercal$, the simulated results are given in Fig.~\ref{fig:numerical}. 
The evolution of ratios of different components is plotted in Fig.~\ref{fig:ratio}, where solid and dash lines correspond to the actual and desired ratios, respectively. 
It is clear that the ratios of the components will converge to $\mu$ as the iteration increases. 
Then, using different amount of data pairs, the topology matrix is obtained by solving the homogeneous equation $M \operatorname{vec}(W)=0$. 
As shown in Fig.~\ref{fig:error}, the corresponding $E(W,\hat{W})$ and $R(W,\hat{W})$ are biased when $m<5$, while perfect when $m\ge 5$. 
This phenomenon matches our intuition that the topology is identifiable up to a scalar when we have an appropriate number of noise-free data.

\begin{figure}[t]
\centering
\subfigure[State ratio evolution of different components.]{\label{fig:ratio}
% \vspace{-10pt}
\includegraphics[width=0.36\textwidth]{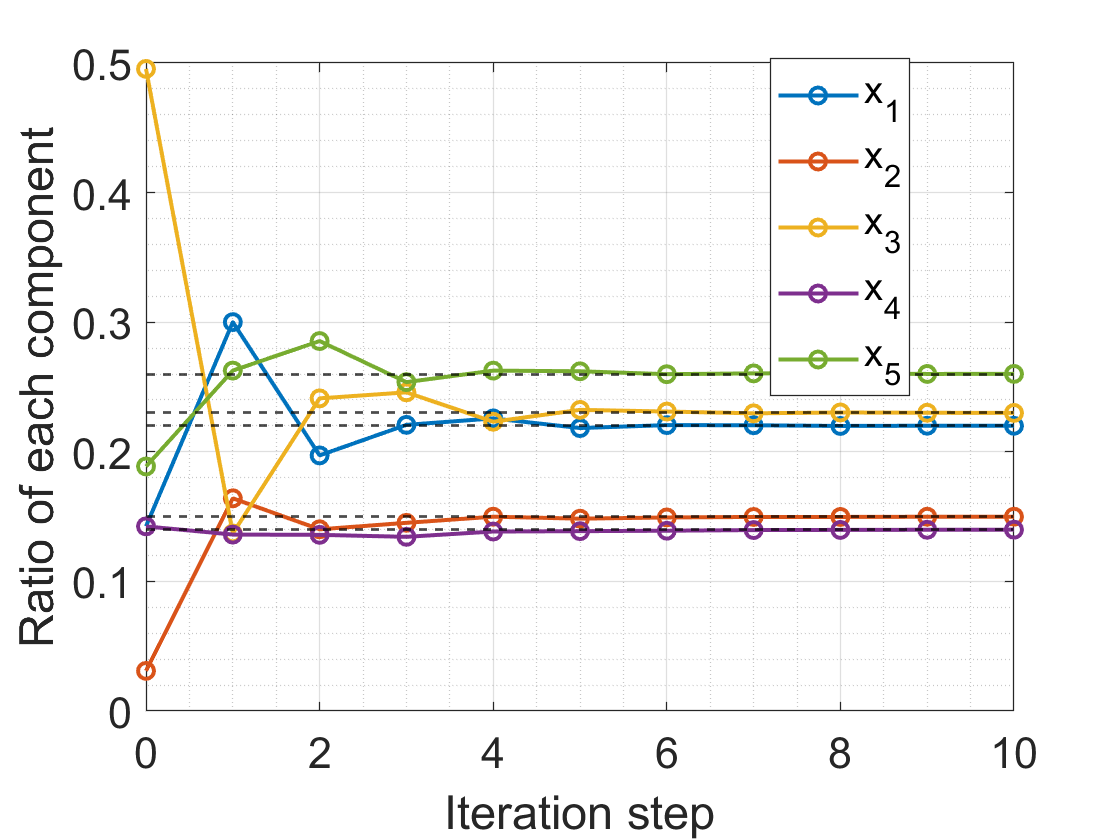}}
% \vspace{-10pt}
\subfigure[Inference error of $\hat{W}$.]{\label{fig:error}
\includegraphics[width=0.36\textwidth]{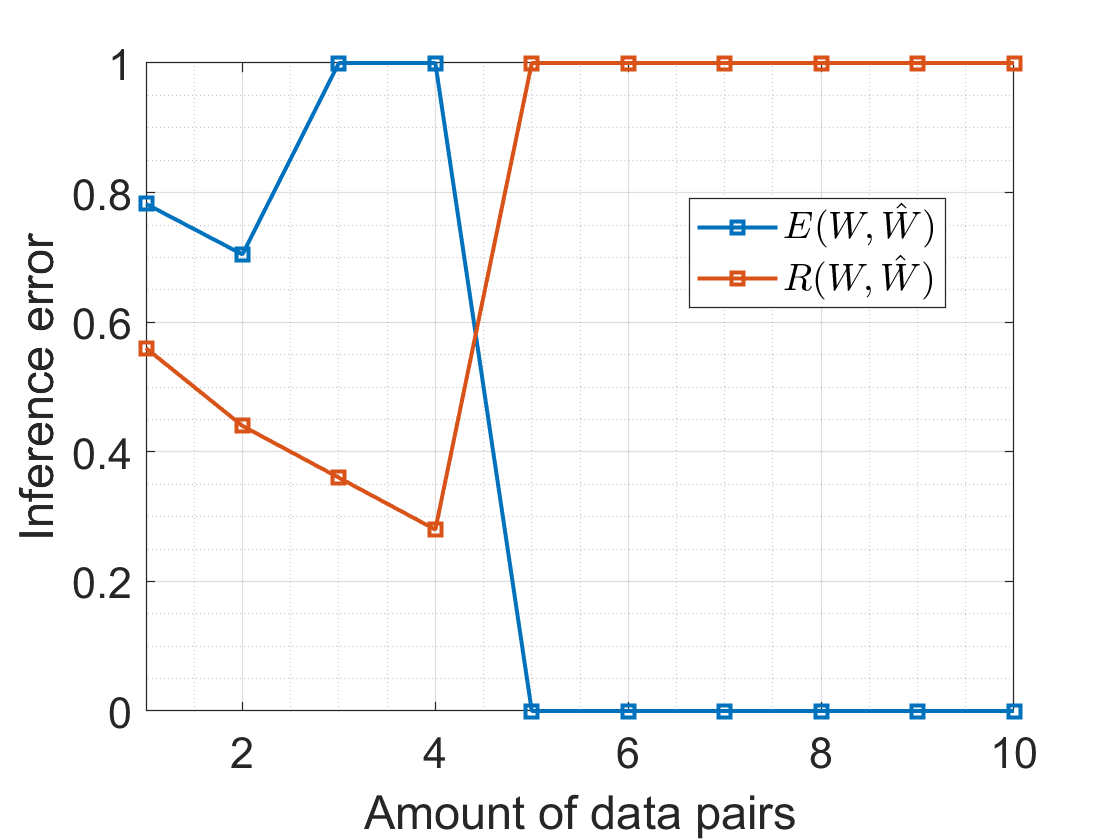}}
\vspace{-8pt}
\caption{Inference performance on a numerical example.}
\label{fig:numerical}
% \vspace{-8pt}
\end{figure}

\begin{figure*}[t]
\centering
\subfigure[Cell distribution at $2$h and $20$h in one experiment.]{\label{fig:cell}
% \vspace{-10pt}
\includegraphics[width=0.38\textwidth]{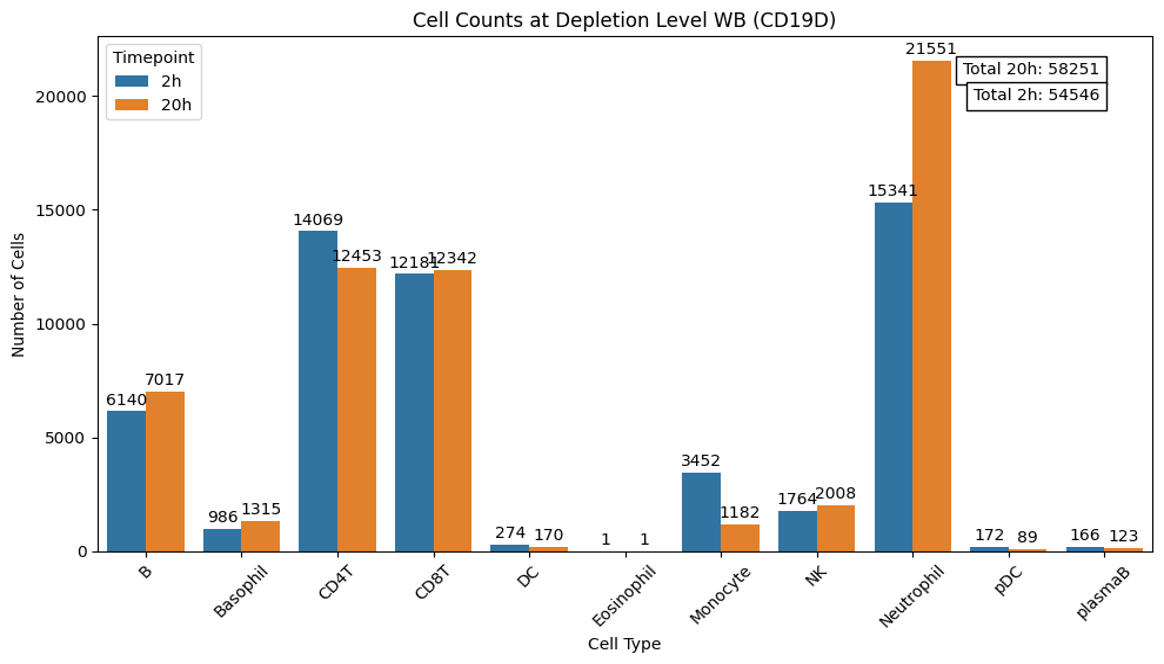}}
\subfigure[Inferred topology matrix illustration.]{\label{fig:topo}
% \hspace{-10pt}
\includegraphics[width=0.29\textwidth]{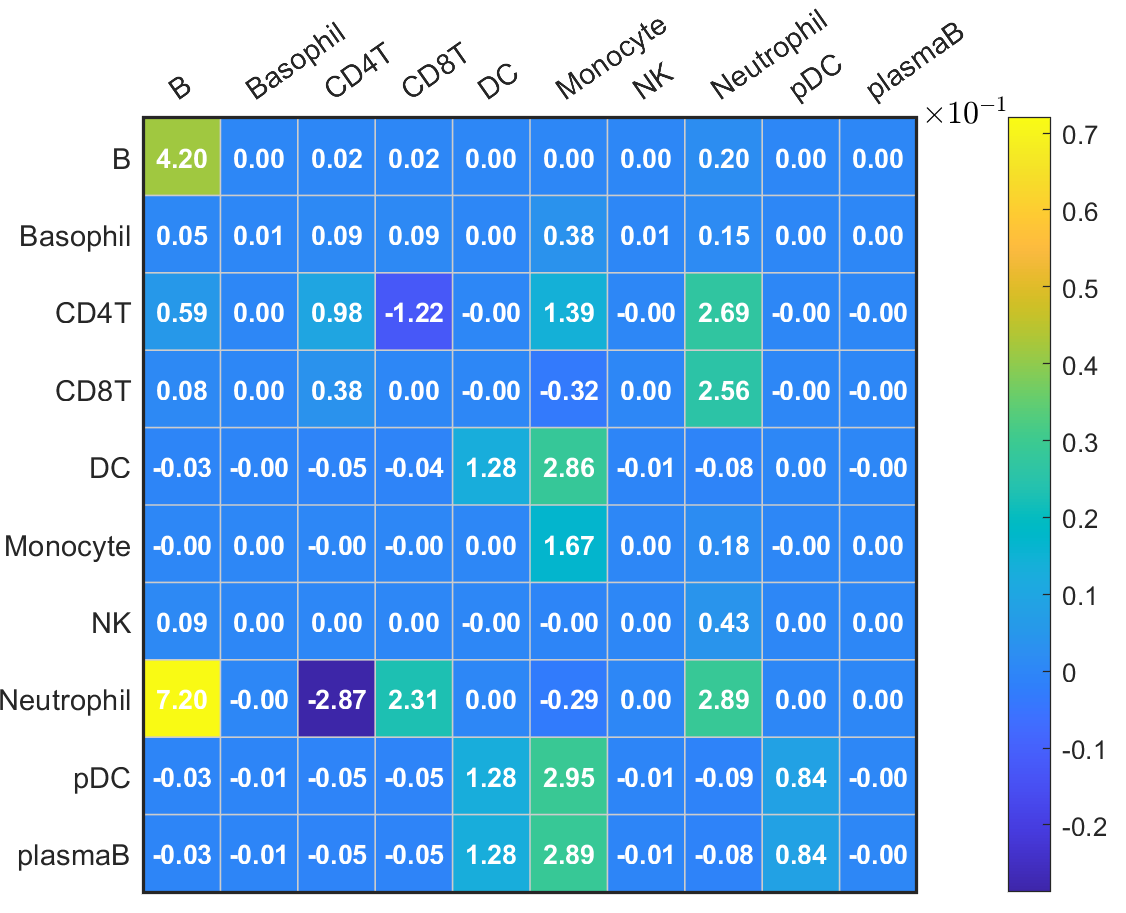}}
% \hspace{-10pt}
\subfigure[Prediction errors on data pairs.]{\label{fig:prediction}
\includegraphics[width=0.28\textwidth]{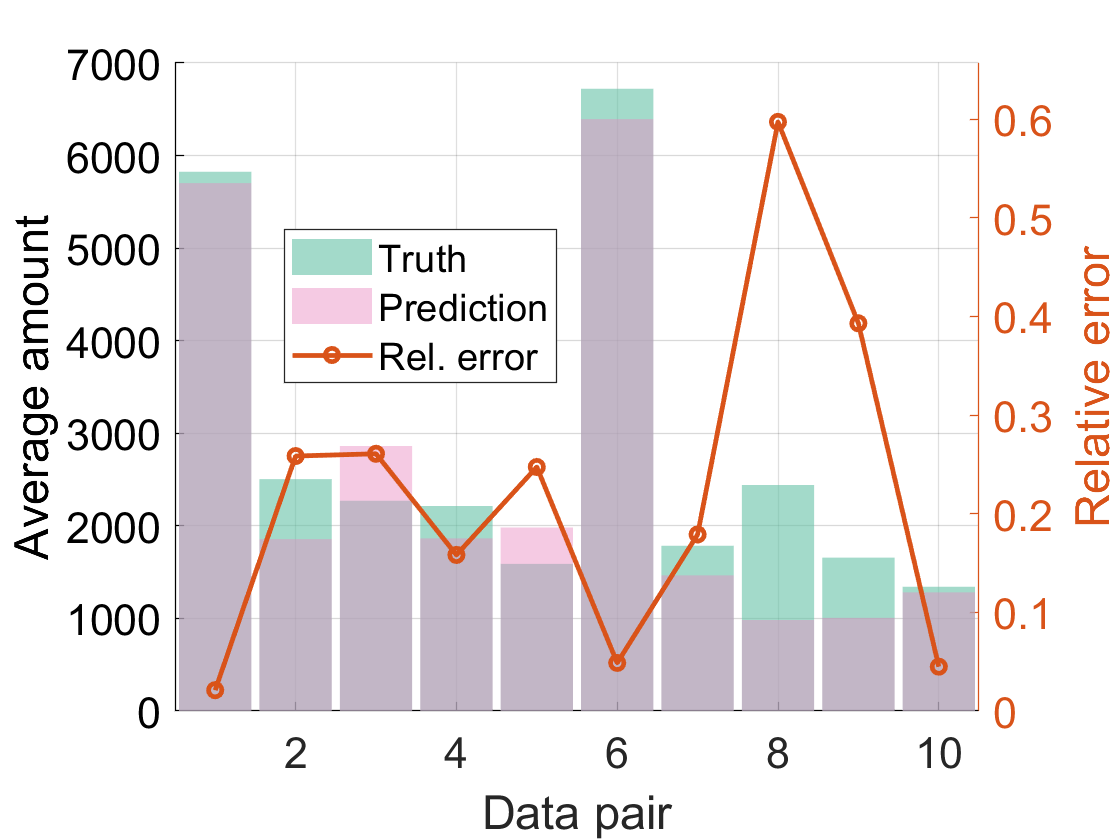}}
\vspace{-8pt}
\caption{Inference performance on real experiments.}
\label{fig:experiments}
% \vspace{-8pt}
\end{figure*}

% \begin{figure}[t]
% \centering
% \subfigure[Graph illustration of the inferred topology.]{\label{fig:topo}
% % \hspace{-10pt}
% \includegraphics[width=0.4\textwidth]{Experiment_matrix.png}}
% % \hspace{-10pt}
% \subfigure[Prediction errors on data pairs.]{\label{fig:prediction}
% \includegraphics[width=0.4\textwidth]{Experiment_prediction.png}}
% \vspace{-8pt}
% \caption{Inference performance on real experiments.}
% \label{fig:experiments}
% % \vspace{-8pt}
% \end{figure}

\subsection{Validation on Real Immune Cell Experiments}
 
In the real immune cell experiments (Forlin et al, Unpublished), we consider $10$ types of immune cells: \{\textit{B, Basophil, CD4T, CD8T, DC, Monocyte, NK, Neutrophil, pDC, plasmaB}\}, and order them from $1$ to $n$. 
The procedures of these experiments are explained in Sec. \ref{subsec:principle}. 
% and have collected $m=10$ groups of data pairs. 
Here we collect $m=10$ groups of data pairs, and draw the cell amount distribution of one group in Fig.~\ref{fig:cell}. 
Considering the measurement noises, we obtain the topology by solving the optimization problem \eqref{eq:op_problem}, and visualize it in Fig.~\ref{fig:topo}. 
Note that the values displayed in the matrix blocks are magnified $10$ times for better reading. 
Since the ground truth topology of the immune network is unknown in practice, we cannot use the metrics $E(W,\hat{W})$ and $R(W,\hat{W})$ to directly evaluate the inference performance. 
Instead, we adopt the following relative prediction error on a sample
\begin{align}\label{eq:rela_error}
  E_p (\hat{W},x_0)=\left\|\frac{  \mu^\intercal x_0 }{ \mu^\intercal \hat{W} x_0 }  \hat{W} x_0  - x_1 \right \|/\|x_1 \|.
\end{align}
Then, the average amount of all components of $x^r_1(r=1,\cdots,10)$ and its corresponding prediction amount are drawn in Fig.~\ref{fig:prediction}, along with the relative prediction error curve. 
It is intuitive to find that most of the relative errors are below $0.3$. 
Notice that the average of the 10 relative errors is $0.327$, with only two of them being larger than $0.3$. 
These two results correspond to the cases where the overall cell amount is small. 
In this regard, the proposed model and inference method achieve acceptable performance on revealing the interaction topology of the immune network.

\section{Conclusions}\label{sec:conclusion}

In this paper, we investigated the topology inference problem of a class of immune networks. 
First, 
% based on certain common knowledge on the interactions of immune cells, 
we constructed a new nonlinear model to describe the immune network, enjoying the merits of simple structure and physical interpretations. 
Then, we derived the sufficient conditions for model to guarantee that the state non-negativity and the ratio-based convergence can be achieved simultaneously. 
Finally, a constrained QP method was provided to infer the topology matrix from data. 
Numerical simulations and validation on experiment data demonstrated the effectiveness of the proposed method. 
Future direction includes investigating the topology identifiablity under weak prior parameter assumptions, giving systematic sensitivity analysis to prior parameters, and providing efficient input design under stimulants for practical immune experiments.

% \bibliography{ifacconf.bib}             % bib file to produce the bibliography

                                                     % with bibtex (preferred)

\end{document}